\documentclass[11pt]{amsart}   

\usepackage{amssymb,amsthm,amsmath}
\newcommand{\oblock}[1]{\begin{matrix} #1 \end{matrix}}

\setlength{\evensidemargin}{0.1in}
\setlength{\oddsidemargin}{0.1in}
\setlength{\textwidth}{6.3in}
\setlength{\topmargin}{0.0in}
\setlength{\textheight}{8.7in}

\usepackage{color}
\definecolor{darkblue}{rgb}{0, 0, .4}
\definecolor{grey}{rgb}{.7, .7, .7}

\usepackage[breaklinks]{hyperref}
\hypersetup{
        colorlinks=true,
        linkcolor=darkblue,
        anchorcolor=darkblue,
        citecolor=darkblue,
        pagecolor=darkblue,
        urlcolor=darkblue,
        pdftitle={},
        pdfauthor={}
}


\newtheorem{theorem}{Theorem}[section]
\newtheorem{lemma}[theorem]{Lemma}

\theoremstyle{definition}
\newtheorem{definition}[theorem]{Definition}
\newtheorem{example}[theorem]{Example}

\theoremstyle{remark}
\newtheorem{remark}[theorem]{Remark}

\numberwithin{equation}{section}


\theoremstyle{theorem}
\newtheorem{corollary}[theorem]{Corollary}

\newtheorem{proposition}[theorem]{Proposition}

\begin{document}

\title[Using carry-truncated addition for add-rotate-xor algorithms]{Using carry-truncated addition to analyze add-rotate-xor hash algorithms}

\begin{abstract}
We introduce a truncated addition operation on pairs of $N$-bit binary numbers
that interpolates between ordinary addition mod $2^N$ and bitwise addition in
$\left(\mathbb{Z}/2\mathbb{Z}\right)^N$.  We use truncated addition to analyze
hash functions that are built from the bit operations add, rotate, and xor,
such as {\tt Blake}, {\tt Skein}, and {\tt Cubehash}.   Any ARX algorithm can
be approximated by replacing ordinary addition with truncated addition, and we
define a metric on such algorithms which we call the {\bf sensitivity}.  This
metric measures the smallest approximation agreeing with the full algorithm a
statistically useful portion of the time (we use $0.1\%$).  Because truncated
addition greatly reduces the complexity of the non-linear operation in ARX
algorithms, the approximated algorithms are more susceptible to both collision
and pre-image attacks, and we outline a potential collision attack explicitly.
We particularize some of these observations to the {\tt Skein} hash function. 
\end{abstract}

\author{Rebecca E. Field and Brant C. Jones}
\address{Department of Mathematics and Statistics, MSC 1911, James Madison University, Harrisonburg, VA 22807}
\email{\href{mailto:[fieldre,brant]@math.jmu.edu}{\texttt{[fieldre,brant]@math.jmu.edu}}}
\urladdr{\url{http://www.math.jmu.edu/\~[fieldre,brant]/}}


\maketitle

\bigskip
\section{Introduction}\label{s:background}

This paper is concerned with a family of hash algorithms that are defined in
terms of addition mod $2^{N}$ (denoted $+$), bitwise rotation, and exclusive or
(denoted $\oplus$) which is equivalent to bitwise addition mod $2$.  Such algorithms are 
referred to as ARX algorithms.

The non-linearity of ARX algorithms over $\left(\mathbb{Z}/2\mathbb{Z}\right)^N$
relies exclusively on the addition mod $2^N$ component.  As in base $10$, we can
perform addition on each digit and keep a carry value for each position to
record overflows.  In base $2$, we will observe that carrying occurs frequently
and so the addition-with-carrying operation is indeed highly non-linear and we
note that computers are designed to compute this type of non-linearity
efficiently.

In this work, we replace ordinary addition mod $2^N$ with a series of
approximations that converge to actual addition.  These approximations arise
from truncating the number of carry values that we record.  The zeroth
approximation is addition with no carries which corresponds to the exclusive-or
operation.  The first approximation is bitwise addition plus a single carry term
for each bit; namely, we look back a single bit for carry terms and do not
``carry our carries.''  The second approximation involves looking back two bits
for carry information, and so on.  The surprising fact is that for $64$-bit
binary numbers, the fourth approximation and the actual sum coincide a
statistically useful percentage of the time.  The eighth approximation coincides
with ordinary addition more than $90$ percent of the time.

In light of this, it is natural to consider replacing instances of ordinary
addition in an ARX algorithm by the simpler truncated addition operation.  We
describe a polynomial encoding for hash algorithms that can in principle be used
to find collisions and preimages for the algorithm with truncated addition.
Although neither attack is currently practical, we show that replacing ordinary
addition by truncated addition dramatically reduces the degree of these
polynomials, which should facilitate their analysis.  When collisions exist in
the version using truncated addition and the algorithm using truncated
addition agrees with the usual algorithm sufficiently often, then one obtains
collisions in the full algorithm with a significant nonzero probability.  

We also use this setting to describe a new metric that measures the strength of
ARX hash algorithms.  This metric can be described as the number of carry bits
that must be used before we can find cases where the full algorithm and its
approximation using truncated addition agree a statistically useful percent of
the time.  We measure this using a computer implementation of the algorithm and
a random search through 10 million inputs.  This metric is found to agree with
the popular wisdom, based on factors such as the speed of hashing, that {\tt
Cubehash}160+16/32+160-256 is stronger than the ARX algorithms that were final
round candidates for the SHA-3 competition.  This algorithm requires 13
bits of carrying before matches can be found.  In contrast, we were able to
find 29 cases of agreement per 10,000 random inputs using only 9 bits of
carrying for the algorithm {\tt Skein}.  This means that it suffices to attack
the $9$-truncated approximation rather than the full addition version of {\tt
Skein} as its approximation coincides a sufficient percent of the time.

The main technique in this paper, replacing addition with truncated addition, has been used as part of cryptographic attacks in the past.  In \cite{salsa}, a series of approximations for the hash algorithm {\tt Salsa20/8} (a reduced round version of the full {\tt Salsa} algorithm) are shown to possess the same bias in differential probabilities as the full 
algorithm.  As the full key is not necessary to trace backwards for the approximate algorithm, this differential bias can be used to distinguish key conjectures that are good candidates for the approximate to the true key.  Using a combination of second and third order approximation (two or three cary bits are recorded, but no others), the authors are able to show that a key can be found in a better than exhaustive search.

Here, we use truncated addition to define a new (and concrete) method to compare the robustness of different ARX algorithms.  As part of this comparison, each algorithm is assigned an approximation of sufficient complexity that any cryptographic attack can be applied to the approximations with statistically significant results for the full algorithm.  We also provide a direct combinatorial proof of the exact probability that truncated addition and ordinary addition will produce the same result, a significant improvement to approximations such as ``the $d^{th}$ order term may be ignored with probability $1-2^{-d}$" currently in the literature \cite{salsa}.

In Section~\ref{s:approx} we describe our truncated addition operation in
detail.  In Section~\ref{s:encoding} we explain how to encode a hash algorithm
as a system of polynomial equations.  Section~\ref{s:metric} gives some
empirical data about hash algorithms from the NIST competition \cite{nist}.  In
Section~\ref{s:skein} we give some suggestions for future
research regarding the algorithm {\tt Skein}.  A short conclusion follows in
Section~\ref{s:conclusions}.

\bigskip
\section{An approximation to addition by truncated carries}\label{s:approx}

Fix an integer $N$.  In our applications $N = 32$ or $N = 64$, and we represent
integers in binary notation using $N$ bits.  For example, $x = \sum_{i=0}^{N-1}
x_i 2^i$ has binary digits $x_i \in \{0,1\}$.  We will sometimes write these
digits as an array $[x_{N-1}, x_{N-2}, \ldots, x_1, x_0]$ with the least
significant bit in the rightmost position.

\begin{definition}
Let $x = \sum_{i=0}^{N-1} x_i 2^i$ and $y = \sum_{i=0}^{N-1} y_i 2^i$.  We can
then view $x$ and $y$ as elements of $\mathbb{Z}/(2^N \mathbb{Z})$ and
$\left(\mathbb{Z}/2\mathbb{Z}\right)^N$ simultaneously.  Here, $\mathbb{Z}/(2^N
\mathbb{Z})$ represents the group of integers with addition mod $2^N$, while
$\left(\mathbb{Z}/2\mathbb{Z}\right)^N$ represents bitstrings of length $N$
under componentwise addition mod $2$.  We denote the ordinary
addition of these integers in $\mathbb{Z}/(2^N \mathbb{Z})$ by $x+y$.  We denote
the bitwise addition of these integers 
\[ \sum_{i=0}^{N-1} \left( (x_i + y_i) \mod 2\right) 2^i \]
in $\left(\mathbb{Z}/2\mathbb{Z}\right)^N$ by $x \oplus y$.
\end{definition}

To relate these operations, we introduce the {\bf carry array} $\mathsf{c}(x,y)
= \sum_{i=1}^{N-1} \mathsf{c}_i(x,y) 2^i$, where
\[ \mathsf{c}_i(x,y) = \begin{cases} 1 & \text{ if $x_{i-1} + y_{i-1} +
    \mathsf{c}_{i-1}(x,y) \in \{2, 3\}$ } \\ 0 & \text{ otherwise. } \end{cases} \]
Then the usual addition algorithm using carries yields
\[ x + y = x \oplus y \oplus \mathsf{c}(x,y). \]

Observe that $\mathsf{c}_0(x,y)$ is always $0$ by definition.  If $x_{N-1} = 1$
and $y_{N-1} = 1$, then we would generate a carry at the $N$th position, but $2^N =
0$ in $\mathbb{Z}/(2^N \mathbb{Z})$ so we omit this.

\begin{lemma}\label{l:carry}
We have that $\mathsf{c}_i(x,y) = 1$ if and only if there exists $j < i$ such
that $(x_j, y_j) = (1,1)$ and for all $j < k < i$, we have $x_k + y_k = 1$.
\end{lemma}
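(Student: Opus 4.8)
The plan is to prove both directions simultaneously by induction on $i$, after first repackaging the recursive definition of $\mathsf{c}_i$ into a cleaner ``generate/propagate/absorb'' trichotomy. Since $x_i, y_i \in \{0,1\}$, the quantity $x_i + y_i + \mathsf{c}_i(x,y)$ lands in $\{2,3\}$ in exactly three situations: when $(x_i, y_i) = (1,1)$ a carry is generated, so $\mathsf{c}_{i+1} = 1$ regardless of $\mathsf{c}_i$; when $x_i + y_i = 1$ the carry propagates, so $\mathsf{c}_{i+1} = \mathsf{c}_i$; and when $(x_i, y_i) = (0,0)$ the carry is absorbed, so $\mathsf{c}_{i+1} = 0$. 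I would record this as a preliminary observation. For the base case $i = 1$, the recursion gives $\mathsf{c}_1 = 1$ iff $x_0 + y_0 \in \{2,3\}$, which forces $(x_0, y_0) = (1,1)$; the right-hand condition, with the only available index $j = 0$ and a vacuous propagation range, matches exactly.

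For the inductive step I would assume the equivalence at $i$ and verify it at $i+1$ by splitting on $(x_i, y_i)$. In the generate case both sides hold: take the witness $j = i$, whose propagation range $i < k < i+1$ is empty. In the absorb case both sides fail: any candidate index $j \le i$ either equals $i$, which is excluded since $(x_i,y_i) \neq (1,1)$, or else has $k = i$ inside its propagation range, which is excluded since $x_i + y_i = 0 \neq 1$; hence no witness exists.

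The propagate case is the crux, and I expect the quantifier bookkeeping there to be the main obstacle. Here $\mathsf{c}_{i+1} = \mathsf{c}_i$, so I need the right-hand condition for $i+1$ to be logically equivalent to the one for $i$. Because $(x_i, y_i) \neq (1,1)$, no witness $j$ can equal $i$, so every candidate already satisfies $j < i$; and because $x_i + y_i = 1$ holds, enlarging the propagation range from $j < k < i$ to $j < k < i+1$ only appends the automatically-satisfied constraint at $k = i$. The two right-hand conditions therefore coincide, the inductive hypothesis applies verbatim, and the induction closes. Once the trichotomy is isolated, this alignment of the existential witness with the boundary index $k = i$ of the universal range is the only genuine subtlety.
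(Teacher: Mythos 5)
Your proof is correct, and its engine---the generate/propagate/absorb trichotomy---is exactly the case analysis the paper extracts from the recursion: the paper phrases it as strings of carrying being ``started'' by a $(1,1)$ pair, ``continued'' by $(0,1)$, $(1,0)$, $(1,1)$ pairs, and ``stopped'' by a $(0,0)$ pair. Where the two arguments part ways is in how that trichotomy is turned into the stated equivalence. The paper argues directly and informally, and must explicitly confront the subtlety that several $(1,1)$ pairs may precede position $i$: it resolves this by choosing the $(1,1)$ pair of greatest position $j$, which forces every intermediate pair to lie in $\{(0,1),(1,0)\}$ and hence satisfy $x_k+y_k=1$. You instead run a formal induction on $i$, and that same subtlety evaporates: in your generate case the witness $j=i$ is automatically the rightmost possible one, and in your propagate case the alignment of the excluded witness $j=i$ with the newly appended constraint at $k=i$ (both settled by $x_i+y_i=1$) does the work that the maximal-witness choice does in the paper. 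The trade-off is what you would expect: your version is longer but checkable line by line, with every quantifier accounted for, while the paper's version is shorter and keeps the combinatorial picture of maximal carry strings in plain view; since both rest on the same decomposition of the recursion, neither needs any idea the other lacks.
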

\begin{proof}
It follows from the definitions that 
\[ \mathsf{c}_i(x,y) = \begin{cases}
    \mathsf{c}_{i-1}(x,y) & \text{ if $\mathsf{c}_{i-1}(x,y) = 1$ and $x_{i-1}+y_{i-1} \in \{1, 2\}$ } \\
    0 & \text{ if $\mathsf{c}_{i-1}(x,y) = 1$ and $x_{i-1}+y_{i-1} = 0$ } \\
    \mathsf{c}_{i-1}(x,y) & \text{ if $\mathsf{c}_{i-1}(x,y) = 0$ and $x_{i-1}+y_{i-1} \in \{0, 1\}$ } \\
    1 & \text{ if $\mathsf{c}_{i-1}(x,y) = 0$ and $x_{i-1}+y_{i-1} = 2$.} \\
\end{cases} \]
Hence, strings of carrying are started by a $(x_j,y_j) = (1,1)$ pair,
continued by $(0,1), (1,0)$ and $(1,1)$ pairs, and stopped by a $(0,0)$ pair.
If there are multiple $(1,1)$ pairs prior to position $i$, we choose the pair
with the greatest position $j$ so that $(x_k, y_k) \in \{ (0,1), (1,0) \}$ for
all $j < k < i$ by construction.
\end{proof}

Observe that in the worst case, we might have to look back $N-1$ positions to
decide whether a carry exists at the most significant position.  We now define a
version of addition based on a carry array that uses the information from at
most $m$ prior positions.

\begin{definition}\label{d:tradd}
Let $\mathsf{c}_i^{(m)}(x,y)$ be $1$ if there exists $i-m \leq j < i$ such that
$(x_j, y_j) = (1,1)$ and for all $j < k < i$ we have $x_k + y_k = 1$.
We then define the {\bf $m$-truncated addition} of $x$ and $y$ to be
\[ x +_{m} y := x \oplus y \oplus \mathsf{c}^{(m)}. \]
where $\mathsf{c}^{(m)} = \sum_{i = 1}^{N-1} \mathsf{c}_i^{(m)} 2^i$.
\end{definition}

Observe that $x +_0 y = x \oplus y$ and $x +_{(N-1)} y = x + y$ so truncated
addition generalizes and interpolates between these operations.

\begin{example}
If $N = 4$ then 
\[ \oblock{  & 1 & 0 & 0 & 1 \\ +_3 & 1 & 0 & 1 & 1 \\ \hline \\ & 0 & 1 & 0 & 0 \\} 
\hspace{0.9in} 
\oblock{  & 1 & 0 & 0 & 1 \\ +_1 & 1 & 0 & 1 & 1 \\ \hline \\ & 0 & 0 & 0 & 0 \\} \]
represents $9 + 11 = 20$ which is equivalent to $4 \mod 2^N$, and $9 +_1 11 =
0$, respectively.  In the first case where $m = N-1 = 3$, the carry array is $\mathsf{c}^{(3)}
= [0, 1, 1, 0]$.  In the second case where $m = 1$, the $1$-truncated carry
array is $\mathsf{c}^{(1)} = [0, 0, 1, 0]$.  We see that $\mathsf{c}^{(1)}_2 =
0$ since there is no $(1,1)$ pair lying within $m = 1$ positions prior to
position $i = 2$.  On the other hand, $\mathsf{c}^{(3)}_2 = 1$ since there does
exist a $(1,1)$ pair lying within $m = 3$ positions prior to position $i = 2$.
\end{example}

\begin{proposition}\label{p:av}
We have $x + y = x +_m y$ if and only if the sequence $\{x_i+y_i\}_{i=0}^{N-2}$
does not contain a $2$ directly followed by a contiguous subsequence of $m$
$1$'s as $i$ runs from $0$ to $N-2$.
\end{proposition}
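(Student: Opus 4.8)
The plan is to reduce the statement to a comparison of the two carry arrays. Since $x+y = x \oplus y \oplus \mathsf{c}(x,y)$ and $x +_m y = x \oplus y \oplus \mathsf{c}^{(m)}(x,y)$, and XOR-ing by the common term $x \oplus y$ is a bijection on $\left(\mathbb{Z}/2\mathbb{Z}\right)^N$, we get $x+y = x+_m y$ if and only if $\mathsf{c}(x,y) = \mathsf{c}^{(m)}(x,y)$, that is, $\mathsf{c}_i(x,y) = \mathsf{c}_i^{(m)}(x,y)$ for every $i$. So the proposition becomes a statement about when the full and truncated carry bits agree at every position, and I would carry it out purely in terms of the carry arrays.

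First I would observe that the truncated carry is always dominated by the full carry: comparing Lemma~\ref{l:carry} with Definition~\ref{d:tradd}, the condition defining $\mathsf{c}_i^{(m)} = 1$ is exactly the condition defining $\mathsf{c}_i = 1$ together with the extra restriction $i-m \leq j$ on the witnessing position. Hence $\mathsf{c}_i^{(m)} = 1 \Rightarrow \mathsf{c}_i = 1$, and the two arrays agree if and only if there is no index $i$ at which $\mathsf{c}_i = 1$ but $\mathsf{c}_i^{(m)} = 0$. Thus I only need to characterize such a one-sided disagreement.

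To do this I would fix an $i$ with $\mathsf{c}_i = 1$ and use Lemma~\ref{l:carry} to let $j_0$ be the greatest position below $i$ with $(x_{j_0}, y_{j_0}) = (1,1)$ and $x_k + y_k = 1$ for all $j_0 < k < i$. Any truncated witness $j$ is in particular a full witness, hence satisfies $j \leq j_0$; and $j_0$ itself is a full witness, so the truncated condition holds precisely when $j_0 \geq i-m$. Therefore $\mathsf{c}_i^{(m)} = 0$ if and only if $j_0 < i-m$, that is $i - j_0 \geq m+1$. Unwinding the run condition, this says the positions $j_0+1, \ldots, j_0+m$ all lie strictly between $j_0$ and $i$ and hence have bit-sum $1$, while position $j_0$ has bit-sum $2$. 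So a disagreement produces a $2$ at $j_0$ directly followed by $m$ consecutive $1$'s, with all these indices in $\{0, \ldots, N-2\}$ since $j_0 + m \leq i-1 \leq N-2$.

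Finally I would run the converse: given a $2$ at position $p$ followed by $1$'s at $p+1, \ldots, p+m$, all within $0, \ldots, N-2$, I set $i = p+m+1 \leq N-1$. Then the run from $p$ to $i$ is valid, so $\mathsf{c}_i = 1$; and because no $(1,1)$ pair occurs in the open interval $(p,i)$ while $i - p = m+1 > m$, the truncated condition fails and $\mathsf{c}_i^{(m)} = 0$. Combining the two directions gives the equivalence. The part requiring the most care is the boundary bookkeeping --- namely, why only positions $0$ through $N-2$ are relevant (the carry out of the top bit is discarded mod $2^N$, so there is no $\mathsf{c}_N$) and, more importantly, why invoking \emph{the greatest} witness $j_0$ is exactly what collapses the condition $\mathsf{c}_i^{(m)} = 0$ to the single inequality $i - j_0 > m$ rather than a messier statement about the whole window $[i-m, i)$.
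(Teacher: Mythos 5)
Your proof is correct and takes essentially the same approach as the paper, whose entire proof reads ``This follows by comparing Lemma~\ref{l:carry} and Definition~\ref{d:tradd}'' --- your argument is precisely that comparison carried out in full (reduction to equality of the carry arrays, domination of the truncated carry by the full carry, and the greatest-witness analysis, including the boundary bookkeeping at position $N-1$). In effect you have supplied the details the paper leaves to the reader.
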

\begin{proof}
This follows by comparing Lemma~\ref{l:carry} and Definition~\ref{d:tradd}.
\end{proof}

We are now in a position to determine the probability that $x +_m y$ agrees with
$x+y$.  Recall that a {\bf ternary string} is one in which each digit is $0$,
$1$ or $2$.

\begin{lemma}\label{l:table}
Let $P(m)$ be the ternary string $1^m 2 = 11\cdots12$.  Let $p_m(i)$ be the
probability that in a bitwise sum of uniformly chosen binary strings (of any
length $\geq m+1$), the rightmost instance of $P(m)$ as a consecutive substring ends at
position $i$.  Here, we label the positions from right to left, starting from
$0$.  Let $a_m(j)$ be the probability that a bitwise sum of uniformly chosen
binary strings of length $j$ does not contain $P(m)$ as a consecutive substring.
Then we have the system
\begin{equation}\label{e:am}
a_m(j) = 1 - \sum_{i = 0}^{(j-1)-m} p_m(i)
\end{equation}
\begin{equation}\label{e:pm}
p_m(i) = \left(\frac{1}{2}\right)^m \left(\frac{1}{4}\right) a_m(i)
\end{equation}
that can be solved explicitly for $a_m(N-1)$.
\end{lemma}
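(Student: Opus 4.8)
The plan is to read both identities as statements about a single product probability space. Since $x$ and $y$ are drawn uniformly, the bits $x_i, y_i$ are independent and fair, so the ternary sum digits $s_i = x_i + y_i$ are mutually independent with $\Pr[s_i = 0] = \Pr[s_i = 2] = \tfrac14$ and $\Pr[s_i = 1] = \tfrac12$. By Proposition~\ref{p:av}, the event that $x+_m y$ disagrees with $x+y$ is precisely the event that the ternary string $(s_i)$ contains $P(m) = 1^m 2$, so I would work entirely with this string. The orientation convention (positions labelled right-to-left from $0$, with $P(m)$ written most-significant-first) means that an occurrence of $P(m)$ \emph{ending at position $i$} places the symbol $2$ at position $i$ and the $m$ symbols $1$ at positions $i+1, \dots, i+m$; such an occurrence fits inside a length-$j$ string exactly when $i+m \le j-1$, i.e.\ $0 \le i \le (j-1)-m$, which is precisely the summation range in \eqref{e:am}.

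First I would establish \eqref{e:am}. For a fixed length $j$, the events $E_i = \{\text{the rightmost occurrence of } P(m) \text{ ends at position } i\}$, taken over $i = 0, 1, \dots, (j-1)-m$, are pairwise disjoint, and their union is exactly the event that the length-$j$ string contains $P(m)$ somewhere. Hence the probability of containing $P(m)$ equals $\sum_{i=0}^{(j-1)-m} p_m(i)$, and $a_m(j)$, being the probability of the complementary ``avoids $P(m)$'' event, equals $1 - \sum_{i=0}^{(j-1)-m} p_m(i)$. This is the cleanest point at which to justify that $p_m(i)$ is a well-defined, length-independent quantity: the next step shows that $E_i$ depends only on the digits in positions $0, \dots, i+m$, all of which are present once $j \ge i+m+1$.

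Next I would prove \eqref{e:pm} by factoring $E_i$ into two events supported on disjoint blocks of positions. Let $A$ be the event $\{s_i = 2 \text{ and } s_{i+1} = \cdots = s_{i+m} = 1\}$, an occurrence at $i$, with probability $(\tfrac14)(\tfrac12)^m$; and let $B$ be the event that the length-$i$ prefix $s_0 \cdots s_{i-1}$ avoids $P(m)$, with probability $a_m(i)$. The heart of the argument is the identity $E_i = A \cap B$. Granting this, the independence is immediate: $A$ reads only positions $i, \dots, i+m$ and $B$ reads only positions $0, \dots, i-1$, so these blocks are disjoint and $p_m(i) = \Pr[A]\Pr[B] = (\tfrac12)^m (\tfrac14)\, a_m(i)$, which is \eqref{e:pm}.

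Finally, substituting \eqref{e:pm} into \eqref{e:am} and differencing consecutive values collapses the system to the order-$(m+1)$ constant-coefficient recurrence $a_m(j) = a_m(j-1) - 2^{-(m+2)} a_m(j-1-m)$ with initial data $a_m(j) = 1$ for $0 \le j \le m$ (strings shorter than $P(m)$ avoid it trivially); this can be solved in closed form via its characteristic polynomial or generating function, producing $a_m(N-1)$ explicitly. I expect the main obstacle to be the bookkeeping in the factorization $E_i = A \cap B$, where one must reconcile ``no occurrence strictly to the right of $i$'' with ``the length-$i$ prefix avoids $P(m)$'' across the block boundary. The key observation is that once $s_i = 2$ is forced by $A$, no occurrence of $P(m)$ can end at any position $i'$ with $i - m \le i' \le i-1$, since position $i$ then lies among $i'+1, \dots, i'+m$ and would be required to equal $1$; meanwhile every occurrence ending at $i' \le i-m-1$ lies wholly inside positions $0, \dots, i-1$ and is thus governed by $B$. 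Verifying this case split carefully, and checking that it introduces no double counting at the junction, is the one place where the proof is more than formal.
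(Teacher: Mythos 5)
Your proof is correct and takes essentially the same route as the paper's: it establishes Equation~(\ref{e:am}) by decomposing the event that $P(m)$ occurs according to the position where its rightmost instance ends, and Equation~(\ref{e:pm}) by factoring that event into independent constraints on disjoint blocks of digit positions. Your write-up is in fact more careful than the paper's brief argument---the explicit case split verifying $E_i = A \cap B$, the use of ``pairwise disjoint'' where the paper loosely says ``independent,'' and the closing recurrence $a_m(j) = a_m(j-1) - 2^{-(m+2)}\,a_m(j-1-m)$ with $a_m(j)=1$ for $j \le m$ all make precise what the paper only asserts.
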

\begin{proof}
Every instance of $P(m)$ in a ternary string of length $j$ must end at some
position, and each such event is independent, so Equation~(\ref{e:am})
represents the probability that no instances of $P(m)$ occur.
Equation~(\ref{e:pm}) gives the probability that in the bitwise sum of two
uniformly chosen binary strings, the rightmost $i$ positions avoid $P(m)$,
the next position is a $2$ (this occurs with probability $1/4$), the $m$
subsequent positions are $1$'s (these each occur with probability $1/2$), and the
remaining positions are all unrestricted (so contribute probability 1).
\end{proof}

\begin{table}
\begin{center}
\begin{tabular}{|c|c|c|}
\hline
$m$ & $N = 32$-bit & $N = 64$-bit \\
\hline
4 & 63.62771 \% & 37.10136 \% \\
5 & 80.94266 \% & 62.31794 \% \\
6 & 90.49360 \% & 79.59719 \% \\
7 & 95.36429 \% & 89.50263 \% \\
8 & 97.76392 \% & 94.73115 \% \\
9 & 98.92764 \% & 97.38680 \% \\
10 & 99.48763 \% & 98.71143 \% \\
11 & 99.75591 \% & 99.36646 \% \\
12 & 99.88404 \% & 99.68900 \% \\
13 & 99.94507 \% & 99.84747 \% \\
14 & 99.97406 \% & 99.92525 \% \\
15 & 99.98779 \% & 99.96338 \% \\
16 & 99.99428 \% & 99.98207 \% \\
\hline
\end{tabular}
\caption{Probability of $x +_m y = x + y$}\label{t:pmatch}
\end{center}
\end{table}

\begin{corollary}\label{c:pragree}
The probability $\pi_m(N)$ that $x +_m y = x + y$ where $x$ and $y$ are
uniformly chosen $N$-bit integers is $a_m(N-1)$.  Some typical values of
$\pi_m(N)$ are illustrated in Table~\ref{t:pmatch}.
\end{corollary}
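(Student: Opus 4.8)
The plan is to read the corollary directly off Proposition~\ref{p:av} and the definition of $a_m(j)$ in Lemma~\ref{l:table}, treating the identity $\pi_m(N) = a_m(N-1)$ as the real content and the entries of Table~\ref{t:pmatch} as the output of solving the recurrence. First I would invoke Proposition~\ref{p:av}: the event $x +_m y = x+y$ is precisely the event that the ternary sequence $s_i := x_i + y_i$, for $i = 0, 1, \ldots, N-2$, contains no $2$ that is immediately followed, in the direction of increasing $i$, by $m$ consecutive $1$'s. Reading $(s_{N-2}, \ldots, s_1, s_0)$ as a string with position $0$ at the right --- the convention adopted in Lemma~\ref{l:table} --- this forbidden block is exactly $\underbrace{1\cdots1}_{m}\,2 = P(m)$, since the $2$ sits at the lower index and the $1$'s occupy the $m$ higher indices. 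Hence agreement of $x+_m y$ with $x+y$ is equivalent to the length-$(N-1)$ string $(s_i)_{i=0}^{N-2}$ avoiding $P(m)$.

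Next I would verify that the distribution induced on $(s_0, \ldots, s_{N-2})$ by uniform $N$-bit $x$ and $y$ matches the one underlying $a_m$. Because the $2N$ bits $x_i, y_i$ are independent and uniform, the $s_i$ are independent across $i$ with $\Pr[s_i = 0] = \Pr[s_i = 2] = \tfrac14$ and $\Pr[s_i = 1] = \tfrac12$, which is exactly the law of a bitwise sum of two uniformly chosen binary strings. By the definition of $a_m(j)$, the probability that such a length-$(N-1)$ string avoids $P(m)$ is $a_m(N-1)$, whence $\pi_m(N) = a_m(N-1)$. The one bookkeeping point deserving care is why the length is $N-1$ rather than $N$: the carry out of the most significant position is discarded in $\mathbb{Z}/(2^N\mathbb{Z})$, so only positions $0$ through $N-2$ can seed a disagreement, and I would state this explicitly to justify evaluating $a_m$ at $N-1$.

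Finally, to fill in Table~\ref{t:pmatch}, I would solve the system of Equations~(\ref{e:am}) and~(\ref{e:pm}). Setting $c = 2^{-(m+2)}$ and substituting (\ref{e:pm}) into (\ref{e:am}) gives $a_m(j) = 1 - c\sum_{i=0}^{j-1-m} a_m(i)$ with $a_m(j) = 1$ for $0 \le j \le m$; differencing consecutive indices collapses the sum into the order-$(m+1)$ constant-coefficient recurrence $a_m(j+1) = a_m(j) - c\,a_m(j-m)$. I expect this last step to be the main obstacle: the characteristic polynomial $x^{m+1} - x^m + c$ has no elementary roots for general $m$, so ``solved explicitly'' is best read as a clean linear recurrence that one iterates up to $j = N-1$, and the subtlety lies in pinning down the boundary values $a_m(0) = \cdots = a_m(m) = 1$ and checking that the differencing is valid precisely where the summation range first becomes nonempty (at $j = m$).
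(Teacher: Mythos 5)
Your proposal is correct and follows the same route as the paper, whose entire proof is the observation that the corollary follows from Proposition~\ref{p:av} combined with Lemma~\ref{l:table}; you simply spell out the details the paper leaves implicit (the identification of the forbidden pattern with $P(m)$ under the right-to-left indexing, the i.i.d.\ law of the digit sums $x_i + y_i$, and why the relevant string has length $N-1$), all of which are accurate. Your reduction of the system~(\ref{e:am})--(\ref{e:pm}) to the recurrence $a_m(j+1) = a_m(j) - 2^{-(m+2)}a_m(j-m)$ with $a_m(0) = \cdots = a_m(m) = 1$ is likewise a correct account of how the table entries are computed.
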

\begin{proof}
This follows from Proposition~\ref{p:av} and Lemma~\ref{l:table}.
\end{proof}

\bigskip
\section{A polynomial encoding and metrics for ARX algorithms}\label{s:encoding}

In this section, we consider encoding an ARX hash algorithm by a system of
polynomial functions over $\mathbb{F}_2$, the 2-element field.  Here, we mean
that the domain, range, and ring of coefficients of these polynomials should all
be $\mathbb{F}_2$.  We will see that replacing instances of $+$ by $+_m$ reduces
the degree of these polynomials, which facilitates analysis of the hash
algorithm.  At the same time, Table~\ref{t:pmatch} gives some evidence that
making this replacement will not change the output of the hash function too
often.

Observe that our $N$-bit arrays have an action of the symmetric group
$\mathcal{S}_N$ of permutations on $N$ letters given by permuting the entries of
arrays.  In particular, this action allows us to achieve the bitwise rotation
operation.  We denote this action by $\sigma \cdot [x_{N-1}, \ldots, x_0]$ for
$\sigma \in \mathcal{S}_N$.

\begin{proposition}\label{p:pencode}
Consider two $N$-bit arrays $x = [x_{N-1}, \ldots, x_1, x_0]$ and \\ $y = [y_{N-1},
\ldots, y_1, y_0]$, and let $\sigma \in \mathcal{S}_N$.  There exist polynomial
functions in \\
$\mathbb{F}_2[x_0, x_1, \ldots, x_{N-1}, y_0, y_1, \ldots,
y_{N-1}]$ whose evaluation is equal to the $i$th bit of $x \oplus y$, $x + y$
and $\sigma \cdot x$, respectively.  Explicitly, we have
\begin{itemize}
    \item  The $i$th bit of $\sigma \cdot [x_{N-1}, \ldots, x_1, x_0]$ is $x_{\sigma(i)}$.
    \item  The $i$th bit of $[x_{N-1}, \ldots, x_1, x_0] \oplus [y_{N-1},
        \ldots, y_1, y_0]$ is $x_i + y_i$.
    \item  The $i$th bit of $[x_{N-1}, \ldots, x_1, x_0] +_m [y_{N-1}, \ldots, y_1, y_0]$ is
        \[ (x_i + y_i) + \sum_{k = 1}^{\min(i,m)} (x_{i-k} y_{i-k}) \prod_{j=i-k+1}^{i-1}  (x_j + y_j). \]
\end{itemize}
\end{proposition}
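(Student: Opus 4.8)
The plan is to dispose of the first two bullet points by direct appeal to the definitions, and then to verify the third by showing that the displayed sum computes the truncated carry bit $\mathsf{c}_i^{(m)}$ exactly. The permutation action sends the $i$th entry of $\sigma \cdot x$ to $x_{\sigma(i)}$ by definition, so the first bullet is immediate. For the second, recall that $\oplus$ is componentwise addition mod $2$, which is precisely addition in $\mathbb{F}_2$; hence the $i$th bit of $x \oplus y$ is the polynomial $x_i + y_i$. Neither of these involves any carry information.

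For the third bullet, I would start from Definition~\ref{d:tradd}, which gives the $i$th bit of $x +_m y$ as $x_i \oplus y_i \oplus \mathsf{c}_i^{(m)}$; working over $\mathbb{F}_2$, this is $(x_i + y_i) + \mathsf{c}_i^{(m)}$. Thus it suffices to prove that, as elements of $\mathbb{F}_2$,
\[ \mathsf{c}_i^{(m)} = \sum_{k = 1}^{\min(i,m)} (x_{i-k} y_{i-k}) \prod_{j=i-k+1}^{i-1} (x_j + y_j). \]
The strategy is to read off each summand combinatorially. Writing $j = i-k$, the factor $x_j y_j$ equals $1$ in $\mathbb{F}_2$ exactly when $(x_j, y_j) = (1,1)$, while each factor $x_{j'} + y_{j'}$ in the product (over $j < j' < i$) equals $1$ exactly when $x_{j'} + y_{j'} = 1$ as integers. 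Hence the $k$th summand is the indicator of the event that a carry string begins at position $i-k$ with a $(1,1)$ pair and propagates, via $(0,1)$ and $(1,0)$ pairs, all the way to position $i$. As $k$ ranges over $1, \ldots, \min(i,m)$, the origin $j = i-k$ sweeps exactly the positions with $i - m \le j < i$, the truncation $\min(i,m)$ handling the boundary case $i < m$; so each summand corresponds to one admissible originating position for the carry reaching $i$, in agreement with Definition~\ref{d:tradd}.

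The crux of the argument is that these summands are mutually exclusive, so that the sum contains at most one nonzero term and therefore agrees in $\mathbb{F}_2$ with the integer count of such carry strings. I would argue this directly: if two distinct summands indexed by $k_1 < k_2$ were both nonzero, then the $k_2$ summand would force $x_{i-k_1} + y_{i-k_1} = 1$ (since $i - k_2 < i - k_1 < i$ lies in the range of its product), while the $k_1$ summand would force $(x_{i-k_1}, y_{i-k_1}) = (1,1)$ and hence $x_{i-k_1} + y_{i-k_1} = 2$, a contradiction. This is exactly the uniqueness of the carry-string origin recorded in Lemma~\ref{l:carry}. Given mutual exclusivity, the $\mathbb{F}_2$-sum equals $1$ precisely when some admissible carry string exists, which by Definition~\ref{d:tradd} is the condition $\mathsf{c}_i^{(m)} = 1$, and equals $0$ otherwise. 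I expect this exclusivity step to be the only subtle point, since it is precisely what guarantees that the $\mathbb{F}_2$-sum detects the existence of a carry string rather than, a priori, counting overlapping patterns modulo $2$.
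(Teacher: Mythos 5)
Your proof is correct and follows essentially the same route as the paper, whose entire argument for the third bullet is the single sentence that it ``follows from Definition~\ref{d:tradd}.'' The mutual-exclusivity step you make explicit --- that at most one summand can be nonzero, since a $(1,1)$ pair at position $i-k_1$ forces the factor $x_{i-k_1}+y_{i-k_1}$ appearing in any longer summand to vanish in $\mathbb{F}_2$ --- is precisely the detail the paper leaves implicit (it is the uniqueness observation already present in Lemma~\ref{l:carry}), and it is the right point to spell out.
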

\begin{proof}
The first two formulas are straightforward.  The last formula follows from
Definition~\ref{d:tradd}.
\end{proof}

\begin{example}
The addition of two $4$-bit numbers $[x_3, x_2, x_1, x_0] + [y_3, y_2, y_1,
y_0]$ can be represented by the polynomials
\[ [(x_3 + y_3) + (x_2 y_2) + (x_1 y_1)(x_2 + y_2) + (x_0 y_0)(x_1 + y_1)(x_2 +
y_2), \]
\[ (x_2 + y_2) + (x_1 y_1) + (x_0 y_0)(x_1 + y_1), 
 (x_1 + y_1) + (x_0 y_0), x_0 + y_0 ] \]
with maximum degree $4$.  If we use $2$-truncated addition instead, then we
obtain
\[ [(x_3 + y_3) + (x_2 y_2) + (x_1 y_1)(x_2 + y_2), (x_2 + y_2) + (x_1 y_1) + (x_0
y_0)(x_1 + y_1), \] \[ (x_1 + y_1) + (x_0 y_0), x_0 + y_0 ], \]
which has maximum degree $3$.
\end{example}

We consider an {\bf APX hash function} to be any finite composition of the
operations $+$, $\oplus$, and any permutation of the bits in an array.  To find
a collision for such a hash algorithm, it is helpful to have a message that is at least
as long as the output.  We therefore let $n$ be the maximum number of bits in
the input (including both the message as well as any key derived from the
message), output, or internal state.

Let $\bar{x}_i$ be variables representing the bits of input to the hash, so each
$\bar{x}_i \in \{0,1\}$ for $0 \leq i \leq n-1$.  We include variable bits for
the key if it is derived from the message.  We then use
Proposition~\ref{p:pencode} to build polynomials $\bar{y}_i \in
\mathbb{F}_2[\bar{x}_0, \bar{x}_1, \ldots, \bar{x}_{n-1}]$ that represent the
$i$th bit of output from the APX hash function.  We can encode multiple rounds
of a sub-algorithm by iterating the functions we obtain, taking the $\bar{y}_i$
expressed in terms of the $\bar{x}_i$ and using them as input.  

If we do this for two sets of inputs $\bar{x}_i$ and $\bar{x}_i'$, say, then
collisions correspond to nontrivial solutions of the system of polynomial
equations
\[ \{ \bar{y}_i(\bar{x}_0, \bar{x}_1, \ldots, \bar{x}_{n-1}) = \bar{y}_i(\bar{x}_0', \bar{x}_1', \ldots, \bar{x}_{n-1}') \}_{i=0}^{n-1}. \]
Similarly, if we let $\bar{z}_i$ be variables corresponding to the output of a
hash, then a preimage for the output $(\bar{z}_0,...,\bar{z}_{n-1})$
corresponds to a solution of the system of polynomial equations  
\[ \{ \bar{y}_i(\bar{x}_0, \bar{x}_1, \ldots, \bar{x}_{n-1}) = \bar{z}_i \}_{i=0}^{n-1}. \]

These systems each have $2n$ variables and all coefficients are $0$ or $1$.
Therefore, the maximal degree among the $y_i$ is a primary measure of the
complexity of this system, and hence of the APX algorithm.  Each $+$ operation
performed by the algorithm increases the degree, while bitwise permutations do
not increase it at all.

More precisely, we may observe that if $f$ and $g$ are polynomial functions that
represent single bits of output and $\deg(f) \geq \deg(g)$ then 
\[ \deg(f +_m g) = m \deg(f) + \deg(g) \]
by the equation given in Proposition~\ref{p:pencode}. 
Therefore, replacing $+ = +_{N-1}$ by $+_m$ dramatically reduces the degrees of
the encoding polynomials.

In principle, algorithms using Gr\"obner bases can be used to solve such systems
of polynomial equations, see e.g. \cite{ha}.  Neither the collision nor the preimage attacks we have
outlined seem to be currently practical, although this could change due
to an increase in computer power or more efficient Gr\"obner basis algorithms,
an active area of research in mathematics.

Although length of time to find a Gr\"obner basis is difficult to predict,
generally it is true that the higher the degree of the equations, the longer the
algorithm will take, so the degree of a hash algorithm gives a good measure of
algorithm complexity.  

\begin{definition}
We define the {\bf degree} of an APX hash function to be the maximum degree of
its encoding polynomials.
\end{definition}

For ARX algorithms, we have seen that this metric will be dominated by the
number of times $+$ is used in the algorithm.

\begin{definition}
Denote an ARX hash algorithm by $H$, and its output after hashing the message
$M$ by $H(M)$.  Given an ARX hash algorithm $H$, let $H_{m}$ denote the
corresponding algorithm in which all instances of $+$ have been replaced by
$+_m$.  We define the {\bf sensitivity} of $H$ to be the minimum $m$ such that
$H_{m}(M) = H(M)$ for at least $0.1$ percent of the inputs $M$ of each fixed
length.
\end{definition}

The sensitivity measures how vulnerable a given algorithm would be to the types
of attacks we have outlined above.  Notice that the degree and the
sensitivity are related because we would expect that an algorithm using $k$
addition operations would have $H_m(M) = H(M)$ with probability
$\left(\pi_m\right)^k$ by Corollary~\ref{c:pragree}.  This assumes that these
operations occur independently and that the distribution of inputs to the
addition operations are uniform.

\bigskip
\section{Examples from the NIST competition}\label{s:metric}

In this section, we use Monte Carlo experiments to estimate the sensitivity of
some NIST competition algorithms \cite{nist}.  We implemented versions of {\tt
Blake} \cite{blake} and {\tt Skein} \cite{skein} that use truncated addition,
and ran them using random inputs to determine how often these modified
algorithms agree with the original algorithm.  {\tt Cubehash} \cite{cubehash}
did not pass the second round of the NIST competition but also provides an
interesting example for analysis.  The results are displayed in
Table~\ref{t:results}.

\begin{table}
\begin{center}
\begin{tabular}{|c|c|c|c|c|c|}
\hline
Algorithm & Internal state size & Addition bits &  Sensitivity & Number of $+$ operations \\
\hline
{\tt Skein} & 256 & 64 & 9 & 278 \\ 
\hline
{\tt Blake} & 256 & 32 & 10  & 1345 \\
\hline
{\tt Cubehash} & 1024 & 32 & 13  & 6145 \\
\hline
\end{tabular}
\caption{Experimental results}\label{t:results}
\end{center}
\end{table}

These results were generated using 10 trials with 1,000,000 random inputs each.
For these trials, the match between {\tt Skein} using $+_8$ and {\tt
Skein} using $+$ was $.001\%$ while the match between {\tt Skein} using
$+_9$ and {\tt Skein} using $+$ was $.294\%$.  The match between {\tt
Blake} using $+_{10}$ and {\tt Blake} using $+$ was $.106\%$.  The match
between {\tt Cubehash} using $+_{13}$ and {\tt Cubehash} using $+$ was
$3.4319\%$ whereas we found no matches at all between {\tt Cubehash} using
$+_{12}$ and {\tt Cubehash} using $+$.

These results show that we may replace $+$ by the significantly simpler
operation $+_m$ (where $m = 9$, $10$, or $13$) and still achieve the same 
output at least $0.1\%$ of the time.  Therefore collisions found in the
truncated addition versions of the algorithms would translate to
collisions in the full algorithms a statistically useful percent of the time.

\begin{remark}
Since {\tt Blake} uses $32$-bit addition, our truncated approximation reduces
the degree of each addition from $32$ to degree $10$.  On the other hand, {\tt
Skein} uses $64$-bit addition so our truncated approximation gives a much
more dramatic reduction from degree $64$ to degree $9$.  For this reason, we
would say that {\tt Skein} is the weaker algorithm.
\end{remark}

\begin{remark}
There are a total of $278$ $+$ operations in {\tt Skein}.  If all of the
addition operations occurred in independently and in parallel, we would expect
the probability of a match between ${\tt Skein}_9$ (using $+_9$) and ${\tt
Skein}$ (using $+$) to be $\left(\pi_9(64)\right)^{278} = (0.97387)^{278} =
0.000635732714225483$.  In our Monte Carlo experiment, we actually found
matches with probability $0.00294$.

While there are permutations included in each round that amount to the addition
operations being in parallel, many of {\tt Skein}'s additions appear in series.
\end{remark}

\begin{remark}
{\tt Blake} has $1345$ total additions and sensitivity $10$, so we would expect
${\tt Blake}_{10}$ to match {\tt Blake} with probability
$\left(\pi_{10}(32)\right)^{1345} = (0.99488)^{1345} = 0.0010036724$.  In our
experiments, we actually found matches with probability $0.00106$.  This makes
{\tt Blake} almost perfectly efficient via our metric.
\end{remark}

\begin{remark}
The corresponding results for {\tt Cubehash} seem surprising.  The program we
used to compute the sensitivity of {\tt Cubehash} used only $6145$ $+$
operations.  (The number of operations in {\tt Cubehash} depends on the length of
the message being hashed, so it is important to not use generic figures for
this.)

We would expect ${\tt Cubehash}_{13}$ to match ${\tt Cubehash}$ with probability
$\left(\pi_{13}(32)\right)^{6145} = (0.99945)^{6145} = 0.0340243180867048$.  In our
experiments, we actually found matches {\em less} often, with probability
$0.00106$. 

To understand this result, note that differences between $+$ and $+_m$ arise from the
addition of two numbers with long strings of $0/1$ pairs in consecutive
entries.  If a hash algorithm were unlikely to turn inputs into their opposite
entry and then add the result to the original, then it is plausible to have such
a result.  In fact, unlike the other hash algorithms, {\tt Cubehash} uses only
odd rotation constants which may make it less likely to generate such strings.
\end{remark}

It would be interesting to understand the relationship between
$\pi_m(N)^{\text{number of $+$ operations}}$ and the experimental match
percentages more precisely.

\bigskip
\section{Future work for {\tt Skein}}\label{s:skein}

The heart of {\tt Skein} is the tweakable block cipher {\tt Threefish}, and it
is this cipher that we suggest analyzing using truncated addition.  The basic
structure of the {\tt Threefish} cipher is four applications of a non-linear
bijection (defined using add, rotate and xor operations) followed by the
addition of a full-length subkey.  More specifically, {\tt Threefish} breaks
the internal state of $256$ bits into two pairs of $64$-bit words and applies
to each pair an ARX function called {\tt MIX}.  After this, the four words are
permuted (the same permutation, ${\tt PERMUTE} = (0)(13)(2)$, being used each
time).  The rotation constants internal to {\tt MIX} are changed on a schedule
for optimal dispersal, and a `round' in {\tt Threefish} is the application of
one set of {\tt MIX}s and one {\tt PERMUTE}.  Every four rounds, a `subkey' of
length $256$ is added to the current state.  The full specification of {\tt
Threefish} calls for $72$ rounds, so $18$ subkeys added in total.  

Following the scheme outlined in Section~\ref{s:encoding}, a single round of
{\tt Threefish} can be made to act on a set of variables
\[ (x_0,...,x_{63},y_0,...,y_{63},z_0,...,z_{63},w_0,...,w_{63})=(\overline{x},\overline{y},\overline{z},\overline{w})={\bf x} \]
producing $256$ Boolean polynomials in the variables $x_0,...,w_{63}$, one
polynomial for each coordinate.  We call the $i$th such polynomial $f_i$ and
denote the full operation on all of these variables $f = (f_0, f_1, \ldots,
f_{256})$.  We similarly define the polynomials $f_i(x_0,...,w_{63})_m$ to be
the coordinate functions for the truncated addition version of ${\tt
Threefish}_m$ in which  all ordinary additions are replaced by $m$-truncated
addition.

Observe that $f$ is a bijection.  This is due to the fact that when any
add, rotate or xor operation within {\tt MIX} is applied to
$\overline{x},\overline{y}$, one of the two original inputs is retained.  This
follows from the definition
\[\text{{\tt MIX}}(\overline{x},\overline{y})=(\overline{x}+\overline{y},\rho(\overline{x})\oplus(\overline{x}+\overline{y})) \]
where $\rho$ is bitwise rotation.

We first consider the collision attack outlined in Section~\ref{s:encoding}.
Since there are no collisions if the step is a bijection, we must consider
non-bijective rounds.  As the non-bijectivity occurs from adding the subkey,
the first interesting computation would be:

\begin{quote}
Let $K_0$ be the first sub-key and $K_1$ be the second.  Let $I$ be the ideal generated by
\[ f(f(f(f(f({\bf x}+K_0)_m)_m)_m)_m+K_1)_m-f(f(f(f(f({\bf x}'+K_0)_m)_m)_m)_m+K_1)_m. \]
\end{quote}

A Gr\"obner basis for this ideal would detect the interaction between two
non-bijective rounds, yielding real information about the ${\tt Skein}_m$
algorithm.  Although we were unable to reverse enough rounds of ${\tt Skein}_m$
to make a practical attack, we did reverse two rounds of the $m=2$
carry-approximated algorithm on $16$-bits by computing a Gr\"obner
basis\footnote{Using {\tt Sage}/{\tt PolyBoRi} on a 2.53 GHz Intel Core i5
MacBook Pro.  We also investigated the $m=3$ carry-approximated algorithm on
$24$-bits for up to $3$ rounds {\tt Skein}.  While the number of polynomials is
always $24$, and the degrees of these polynomials do not exceed $16$, the
maximum number of terms in each polynomial grows from $10$ to $2521$ to $236187$
for $1$, $2$ and $3$ rounds of {\tt Skein}, respectively.  We attempted to find
a Gr\"obner basis for the ideal generated by these polynomials using {\tt
Sage}/{\tt PolyBoRi}, {\tt Macaulay 2}, and the {\tt Macaulay 2} package {\tt
BooleanGB} \cite{ha}, but none of these returned results for $2$ or more rounds.
These computations would be more feasible if a parallel version of the Gr\"obner
basis algorithm became available.}.

Next, we consider the preimage attack.   A preimage attack has no restrictions
on the number of rounds needed to be useful, as a preimage for even one round
is often difficult.  Let $I$ be the ideal generated by 
\[ {\bf z}-f({\bf x}+K_0)_m, \]
corresponding to the system of equations from Section~\ref{s:encoding}.  In
order to solve for ${\bf x}$ in terms of ${\bf z}$ and produce a true inverse
for one round of the algorithm with truncated addition, we will need to use a lex Gr\"obner
basis algorithm (with the variables in ${\bf z}<{\bf x}$) to produce an
elimination ideal.  As the rounds of ${\tt Threefish}_m$ are not identical (the
rotation constants are different for each round), an inverse for two rounds
would require the same analysis for the ideal generated by 
\[ {\bf z}-f(f({\bf x}+K_0)_m)_m, \]
and, theoretically, this process could be carried out for all $72$ rounds of
${\tt Threefish}_m$ where the rounds containing subkeys would force the introduction
of additional variables.  Although we do not have a practical attack, we were
able to reverse three rounds of the $m = 2$ carry-approximated algorithm on
$12$-bits by computing a Gr\"obner basis\footnote{Using {\tt Sage}/{\tt
PolyBoRi} on a 2.53 GHz Intel Core i5 MacBook Pro.  We were also able to reverse
one round of the $m = 2$ carry-approximated algorithm on $16$-bits.}.

We believe these approaches will lead to useful computations for others with more
computing resources to explore.

\bigskip
\section{Conclusions}\label{s:conclusions}

We have seen how to encode APX hash functions as systems of polynomials over
$\mathbb{F}_2$.  The degree of the approximation obtained by using $m$-truncated
addition will be significantly smaller than the degree of the original APX
function.  The sensitivity measures how small we can let $m$ be and still obtain
a function that reasonably approximates original APX hash function.

One open question that arises from this work is how to construct differential
attacks using the metrics we have described.  It would also be interesting to
examine the encoding polynomials for some of the NIST competition algorithms in
detail, and compute Gr\"obner bases for them.

\bigskip
\section*{Acknowledgments}

We thank Elizabeth Arnold for sharing her expertise on Gr\"obner basis
algorithms and Nicky Mouha for helpful comments on an earlier draft of this
work.  In addition, we'd like to acknowledge the anonymous reviewers who
provided valuable feedback.

\end{document}